\newtheorem{lemma}{\textbf{Lemma}}
\newcommand{\be}{\begin{eqnarray}}
\newcommand{\ee}{\end{eqnarray}}
\newcommand{\ket}[1]{\left | #1 \right\rangle}
\newcommand{\bra}[1]{\left \langle #1 \right |}
\renewcommand{\epsilon}{\varepsilon}
\begin{document}
\title{Device independent witness of arbitrary dimensional quantum systems employing binary outcome measurements}
\author{Miko{\l}aj Czechlewski}
\email{mczechlewski@inf.ug.edu.pl}
\affiliation{Institute of Informatics, National Quantum Information Centre, Faculty of Mathematics, Physics and Informatics, University of Gda\'nsk, Wita Stwosza 57, 80-308 Gda\'nsk, Poland}
\author{Debashis Saha}
\email{saha@cft.edu.pl}
\affiliation{Institute of Theoretical Physics and Astrophysics, National Quantum Information Centre, Faculty of Mathematics, Physics and Informatics, University of Gda\'nsk, Wita Stwosza 57, 80-308 Gda\'nsk, Poland}
\affiliation{Center for Theoretical Physics, Polish Academy of Sciences, Al. Lotnik\'{o}w 32/46, 02-668 Warsaw, Poland}
\author{Armin Tavakoli}
\email{armin.tavakoli@unige.ch}
\affiliation{Département de Physique Appliquée, Université de Genève, CH-1211 Genève, Switzerland}
\author{Marcin Paw{\l}owski}
\email{dokmpa@univ.gda.pl}
\affiliation{Institute of Theoretical Physics and Astrophysics, National Quantum Information Centre, Faculty of Mathematics, Physics and Informatics, University of Gda\'nsk, Wita Stwosza 57, 80-308 Gda\'nsk, Poland}
\begin{abstract}
Device independent dimension witnesses (DW) are a remarkable way to test the dimension of a quantum system in a prepare-and-measure scenario imposing minimal assumptions on the internal features of the devices. However, as the dimension increases, the major obstacle in the realization of DW arises due to the requirement of many outcome quantum measurements. In this article, we propose a new variant of a widely studied communication task (random access code) and take its average payoff as the DW. The presented DW applies to arbitrarily large quantum systems employing only binary outcome measurements. 
\end{abstract}
\maketitle
\section{Introduction}
Realizing higher-dimensional quantum systems with full control is one of the crucial barriers towards implementing many quantum information processing protocols and testing the foundations of physics. While the process of quantum tomography allows us to reconstruct a quantum system, however, it requires the assumption of fully characterized measurement devices. The device independent framework \cite{Wehner2008,Gallego2010} in a \textit{prepare-and-measure} experiment provides a methodology to obtain a lower bound on the dimension without assuming the internal features of the devices. Moreover, quantum advantages in information processing, for example, quantum communication complexity \cite{Brassard2003,Buhrman2010} are linked to this approach. Despite its merits, implementing device independent dimension witnesses (DWs) for higher dimensional quantum systems  \cite{Cabello2012,Hendrych2012,DAmbrosio2014,Ahrens2014,Aguilar2017_1} faces several complications. 

One of the problems in many existing protocols is the requirement of $d$ outcome measurements. As the dimension increases, performing many outcome measurements \cite{Tavakoli2017} becomes practically difficult due to the facts that,
\begin{enumerate*}[label=\alph*)]
\item measurement outcomes turn coarse-grained,
\item the system becomes more prone to decoherence.
\end{enumerate*}
In some cases, one may impose additional assumptions, for instance simulating $d$ outcome measurements by many binary outcome measurements. However, this approach fails to fulfill the requirements of DWs in the strict sense.

Another difficulty arises from the fact that the number of different preparations and measurements (i.e.\, the total number of inputs in the devices) also increases as one seeks to certify higher dimensional system. As a result, the experimental errors grow large due to the finite number of trials and imperfections in the experiment.
 
Furthermore, the applicability of a desired figure of merit, used as DW, should not be limited by a particular dimension. Rather, it should be applicable to test systems of an arbitrarily high dimension. 

In this article, we overcome these challenges by proposing a class of DWs based on random access codes \cite{Ambainis2008} for quantum systems of an arbitrary dimension. In the simplest scenario, a DW can be interpreted as a task carried out by two parties. In each run of the task, the sender Alice obtains an input in the form of a classical variable $a$ and communicates a system to the receiver Bob. Apart from the communicated message, Bob also receives an input $y$ and produces an output $b$. The figure of merit, denoted by $\overline{T}$, of the task could be an arbitrary linear function of the statistics $\overline{T}=\sum_{a,y,b} p(a,y) T(a,y,b) p(b|a,y)$, where $p(b|a,y)$\footnote{$p(a,y)$ could be absorbed into $T(a,y,b)$. Nevertheless, the stated form provides a simple intuition.} refers to the probability of obtaining the output $b$ given the inputs $a,y$, and $T(a,y,b)$ denotes the payoff to that event. Assuming the dimension of the communicated system is $d$, one can obtain the optimal value of the figure of merit, denoted by $\overline{T}^{c}$, for a classical implementation. 
Obtaining a value greater than $\overline{T}^{c}$ from the observed statistics certifies the communicated quantum system to be of at least dimension $d$. Quantum random access codes (QRACs), a primitive quantum communication protocol \cite{Wiesner1983,Ambainis1999,Nayak1999}, can be used for this purpose. The original study of QRACs was restricted to two-dimensional systems \cite{Ambainis2008} and was later generalized to higher dimensions \cite{Galvao2002,Casaccino2008,Tavakoli2015} yielding several interesting results in quantum communication \cite{Hameedi2017,Hameedi2015,Aguilar17_2,Farkas2018}. 

There are advantages of using RAC as DWs. The upper bound on $\overline{T}^{c}$ can be obtained for any $d$. Besides, the number of inputs in the devices increases polynomially with $d$. Note that one can exploit the quantum communication complexity tasks \cite{Buhrman2010}, which involve binary outcome measurement for dimension witnessing, but in that case, the input size grows exponentially with $d$. However, the generalized RAC requires $d$ outcome measurements. To tackle this issue we have introduced a version of RAC, namely, binary RAC. This involves only binary outcome measurements and provides a method to obtain the upper bound of $\overline{T}^{c}$ applicable to arbitrary $d$. 

The paper is organized as follows: first, we describe the generalization of $d$-dimensional RAC, along with the proof of optimal classical protocols and bounds. Next, we propose the binary version (i.e.\ , the outcome $b$ is binary) of  $d$-dimensional RAC taking into account a wider class of payoff function. Then, we derive a condition on the payoff function such that the optimal classical protocol is the same as in a standard RAC. Further, we provide the classical bound and a quantum protocol that violates the proposed DW for arbitrary $d$.

\section{Standard $d$-dimensional  Random Access Code}
\begin{figure}[h]
\centering
\includegraphics[width=\linewidth]{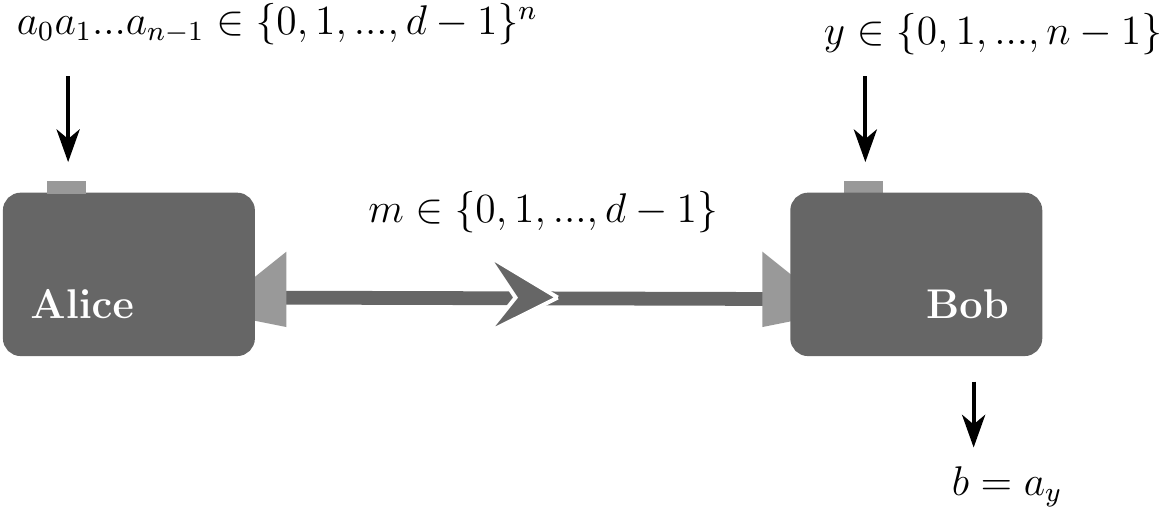}
\caption{Scheme of $d$RAC. Alice gets the input $a_{0},\dots a_{n-1}$ and sends a message $m$ to Bob. Besides the message, Bob also receives the input $y\in\{0,\dots,n-1\}$. His task is to give the output $b$, which obeys the relation $b=a_{y}$.}
\label{fig:dRAC}
\end{figure}
Standard $d$-dimensional random access codes ($d$RAC) are a natural generalization of two-dimensional random access code \cite{Ambainis2008,Tavakoli2015}. Alice receives $n$ numbers $a_{0},\dots a_{n-1}$, where $a_{i}\in \{0,\dots, d-1\}$. Then she sends a $d$-valued (one dit\footnote{By dit we mean $d$-dimensional classical system}) message $m\in\{0,\dots d-1\}$ to Bob. Bob gets an input $y\in\{0,\dots,n-1\}$. He needs to give the output $b$, which obeys the relation $b=a_{y}$ (figure \ref{fig:dRAC}). Specifically, we are interested in the average success probability in the case of the inputs $a,y$ being uniformly distributed and $T(a,y,b)=\delta_{b,a_{y}}$, 
\begin{equation}
\overline{T}_{S}=\frac{1}{nd^{n}}\sum_{a,y} p(b=a_{y}|a,y).
\end{equation}
Since the communicated message $m$ is constrained to be $d$-valued, it is evident that achieving average success probability equals to $1$ is impossible. The aim is to find an optimal strategy for the parties, which gives the largest average success probability.

Following the result in \cite{Ambainis2008} for $d=2$, it has been mentioned in \cite{Tavakoli2015} and shown later in \cite{Ambainis2015} that coding by majority and identity decoding is an optimal strategy for $d$RAC.
In the next two subsections, we demonstrate an alternative shorter proof of this fact and subsequently provide an expression of the optimal average success probability.
\subsection{Optimal classical strategy}\label{sec:OptClassStr}
Due to the linearity of the of figure of merit, it is sufficient to consider only deterministic encoding and decoding strategies to maximize the average success probability. Let us denote the dit-string $a_{0}\dots a_{n-1}$ by $a$. Any encoding strategy can be described by a function $E: \{a\} \equiv \{0,\dots,d-1\}^n \mapsto \{m\} \equiv \{0,1,\dots,d-1\}$ and the probability of sending $m$ for input $a$ is $\delta_{m,E(a)}$. While any decoding for Bob's input $y$ is described as a function $D_y: \{m\} \equiv \{0,1,\dots,d-1\} \mapsto \{b\} \equiv \{0,1,\dots,d-1\}$ and $\delta_{b,D_y(m)}$ is the probability of outputting $b$ when message $m$ is received. Thus, the classical average success probability in the standard RAC is
\begin{align}\label{eq:avrdRACgen}
\overline{T}_{S}^{c}&=\frac{1}{nd^{n}}\sum_{a,y} p(b=a_{y}|a,y) \\
&=\frac{1}{nd^{n}} \sum_m \sum_{a,y} \delta_{m,E(a)} \ \delta_{a_{y},D_{y}(m)} \nonumber \\
&=\frac{1}{nd^{n}} \sum_{m,a} \delta_{m,E(a)} \left(\sum_{y} \delta_{a_{y},D_{y}(m)} \right) \nonumber \\
&\leq\frac{1}{nd^{n}} \sum_{a} \max_{m}  \left(\sum_{y} \delta_{a_{y},D_{y}(m)} \right)\nonumber .
\end{align}
From the above expression, one can observe that for given decoding strategy $D_{y}(m)$ the optimal encoding will be the following
\begin{align}\label{eq:opten}
\delta_{m,E(a)}=1\ &\text{if}\\
\forall m' \in  \{0,\dots,d-1 \}\ 
&\sum_{y}\delta_{a_{y},D_{y}(m)}\geq\sum_{y} \delta_{a_{y},D_{y}(m')}.\nonumber
\end{align}
We can reduce the possibility of all decoding functions into two ways:
\begin{enumerate}[label=\alph*)]
\item identity decoding i.e.\ $\forall y,m, \ D_{y}(m)=m$, \label{itm:deit1}
\item not identity decoding, $\exists y,m$ such that $D(m)\neq m$. Here the mapping $D_{y}$ could be one-to-many in general.\label{itm:deit2}
\end{enumerate}
\begin{lemma}
There exists an optimal classical strategy with identity decoding \ref{itm:deit1}.
\end{lemma}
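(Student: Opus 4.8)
The plan is to recast the statement as a purely combinatorial optimization over the decoded strings and then reduce it to a covering problem on the Hamming cube. First I would use two facts already in hand: by linearity it suffices to consider deterministic strategies, and for any fixed decoding the plurality encoding of Eq.~\eqref{eq:opten} is optimal. Writing $s^{(m)}=(D_0(m),\dots,D_{n-1}(m))\in\{0,\dots,d-1\}^n$ for the string that message $m$ decodes to, and $\mathrm{agree}(a,s)=\sum_y\delta_{a_y,s_y}$, the optimal classical payoff for a given decoding becomes $\tfrac{1}{nd^n}\sum_a\max_m\mathrm{agree}(a,s^{(m)})$, where there are at most $d$ code words $s^{(m)}$. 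Identity decoding is exactly the choice in which every code word is constant, $s^{(m)}=(m,\dots,m)$. Hence the lemma is equivalent to the claim that the $d$ constant strings maximise $G(S):=\sum_a\max_{s\in S}\mathrm{agree}(a,s)$ over all code books $S\subseteq\{0,\dots,d-1\}^n$ with $|S|\le d$; since the plurality encoding is optimal for every decoding, exhibiting a maximiser made of constant strings yields an optimal strategy with identity decoding.

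The key step is a reformulation of $G$ through the layer-cake identity. Since $\mathrm{agree}(a,s)=n-\mathrm{dist}(a,s)$ with $\mathrm{dist}$ the Hamming distance and all values integral, $\max_m\mathrm{agree}(a,s^{(m)})=\sum_{t=1}^n\mathbf{1}[\exists m:\mathrm{dist}(a,s^{(m)})\le n-t]$, and summing over $a$ gives $G(S)=\sum_{r=0}^{n-1}\big|\bigcup_{m}B_r(s^{(m)})\big|$, where $B_r(s)$ is the Hamming ball of radius $r$. Thus maximising the payoff is the same as maximising the total coverage of the cube by the $d$ balls, summed over all radii $r=0,\dots,n-1$. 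The constant strings are pairwise at the maximal Hamming distance $n$, i.e.\ maximally spread out, so their balls overlap as little as possible; this is the structural reason one expects them to maximise the coverage at every radius.

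To turn this intuition into a proof I would split on the radius. For $r<n/2$ the balls around the constant strings are pairwise disjoint, so they already attain the union bound $d\,V(n,r,d)$ with $V(n,r,d)=\sum_{j=0}^r\binom{n}{j}(d-1)^{j}$, which no code book can exceed; hence constant strings are optimal at these radii. For the remaining radii the complementary count is cleaner: I would instead lower-bound, for an arbitrary code book, the number of strings lying outside all $d$ balls and show it is at least the corresponding number for constant strings, namely $\#\{a:\max_v n_v(a)\le n-r-1\}$ (where $n_v(a)=\#\{y:a_y=v\}$). When $d=2$ this finishes the argument, because for every $r\ge n/2$ the constant balls already cover the whole cube, so the coverage is trivially maximal.

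The genuine difficulty—and the step I expect to be the main obstacle—is the intermediate regime, possible only for $d\ge3$, where the constant balls overlap yet do not cover everything; there the union bound is loose and trivial coverage fails, so the optimality must genuinely exploit how the code words are distributed across coordinates rather than merely their pairwise distances. I would attack this by a compression (shifting) argument: modify an arbitrary code book one coordinate at a time so as to move its code words towards constant strings, and prove that each such step does not decrease $\sum_r\big|\bigcup_m B_r(s^{(m)})\big|$. Establishing this monotonicity under compression is the crux on which the whole proof hinges.
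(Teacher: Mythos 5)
Your reduction of the lemma to an extremal coverage problem is faithful and correct: with deterministic strategies and the plurality encoding \eqref{eq:opten}, the best payoff attainable with a fixed decoding is indeed $\tfrac{1}{nd^{n}}\sum_{a}\max_{m}\mathrm{agree}(a,s^{(m)})$, every code book of size at most $d$ arises from some decoding, and identity decoding corresponds exactly to the constant code words. The layer-cake identity, the union-bound argument for $r<n/2$, and the full-coverage argument for $d=2$, $r\ge n/2$ are all sound, so for $d=2$ you have a complete proof by a route genuinely different from the paper's.

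However, the lemma is asserted for arbitrary $d$, and for $d\ge 3$ your argument stops exactly where its content begins. The compression step is named but not carried out, and it is not a routine verification: a single-coordinate shift moving one code word toward a constant string can strictly decrease the coverage $\bigl|\bigcup_{m}B_{r}(s^{(m)})\bigr|$ at radii where balls overlap (the moved ball collides more with the others), so the desired monotonicity can at best hold for the radius-summed coverage, with losses at some radii compensated by gains at others; your proposal offers no mechanism for establishing that compensation, nor does it handle degenerate steps (code words merging, ties). Completing it would essentially amount to re-deriving the nontrivial optimality of majority encoding \cite{Ambainis2015}, which is far more machinery than the lemma needs. The paper sidesteps the extremal problem entirely with a relabeling argument: given any strategy $(E,D_{y})$, it pulls the decoding back into the encoding by setting $E'(D_{0}^{\leftarrow}\cdots D_{n-1}^{\leftarrow}(a))=E(a)$ with identity decoding; every success of $(E,D_{y})$ on $(a,y)$ produces a success of the new strategy on some $(a',y)$ with $a'$ in the (pairwise disjoint, over distinct $a$) preimage sets, so the success count cannot decrease. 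That argument is a few lines, uniform in $d$ and $n$, and never needs to identify the optimal code book --- which is precisely the determination your route still leaves open.
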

\begin{proof}
We will show that for the case described in \ref{itm:deit2}, there exists a strategy obtaining the same average success probability like for the identity decoding \ref{itm:deit1}. Let $D_{y}^{\leftarrow}(b)$ be the domain of $b$, i.e. \ , the set of $m$ such that $D_y(m)=b$. If $b$ does not exist in the range of $D_y$, we define $D_{y}^{\leftarrow}(b)=b$. We denote $D_{y}^{\leftarrow} (a)$ is the set of dit string $a'\equiv a'_{0}\dots a'_{n-1}$ such that $D_{y}(a'_{y})=a_{y}$. Thus, $D_{y}^{\leftarrow} (a)$ acts on the $y$-th dit of the dit string. If there is a classical strategy, having an encoding function $E$ and decoding functions $D_{y}$ (where $D_{y}(m) \neq m$ for some $y,m$), we can construct new encoding and decoding functions as follows
\begin{align}\label{eq:encoddec} 
E'(D_{0}^{\leftarrow}D_{1}^{\leftarrow}\dots D_{n-1}^{\leftarrow} (a))&=E(a) \\
\forall y,m, \ D'_y(m)&=m. \nonumber
\end{align} 
Now, if the strategy $(E,D_{y})$ gives the correct answer for the input $(a,y)$ then the modified strategy $(E', D_{y}{'})$ gives the correct answer for at least one of the inputs $(D_{0}^{\leftarrow}\dots D^{\leftarrow}_{n-1}(a), y)$. Thus, the average success probability for the modified strategy $(E', D')$ is equal or greater than the strategy $(E,D)$.
\end{proof}
From \eqref{eq:opten} we conclude that optimal encoding is as follows
\begin{align}\label{eq:optde}
\delta_{m,E(a)}=1 \ &\text{such that}\\  
\forall m' \in  \{0,\dots,d-1\}
&\sum_{y} \delta_{a_{y},m} \geq \sum_{y} \delta_{a_{y},m'}.\nonumber
\end{align}
In other words, the optimal strategy for Alice is to communicate the majority dit of the input string and $b=m$.

\subsection{Average success probability}
Now we calculate the classical average success probability for an $n$ dit string. The total number of possible inputs is $nd^{n}$. In the $n$ dit string, which is given to Alice, the $i$-th dit $(i \in \{0,1,2,\dots,(n-1)\})$ appears $n_{i}$ times in the string $a$. The number of ways it may occur is the same as the number of solutions in non-negative integers of the equation 
\begin{align}\label{eq:nways}
n_{0}+n_{1}+n_{2}+\dots+n_{d-1}&=n.
\end{align}
The above equation \eqref{eq:nways} is a special case of the equation
\begin{align}\label{eq:DioEq}
c_{0}n_{0}+c_{1}n_{1}+c_{2}n_{2}+\dots+c_{d-1}n_{d-1}&=n,
\end{align}
with  all coefficients $\{c_{0},c_{1},c_{2},\dots ,c_{d-1}\}$ equal 1.
The equation \eqref{eq:DioEq} is known in number theory as the Diophantine equation of Frobenius and it is connected with the Frobenius coin problem and  Frobenius' number \cite{Erdos1972,Dixmier1990}. The total number of possible solutions of \eqref{eq:nways} is $\binom{n+d-1}{d-1}$ \cite{Lint2001book}. For each solution Alice will communicate $\max\{n_{0},n_{1},\dots,n_{d-1}\}$ to Bob. So the number of successful inputs is given by $\frac{n!}{n_{0}!n_{1}!\dots n_{d-1}!}\max\{n_{0},n_{1},\dots,n_{d-1}\}$, as $\frac{n!}{n_{0}!n_{1}!\dots n_{d-1}!}$ is the number of possible combinations for an $n$ dit string with a given set of $n_{i}$'s, and $\max\{n_{0},n_{1},\dots,n_{d-1}\}$ is the number of times where Bob will guess the correct dit. Therefore, the average success probability is given by
\begin{align}\label{eq:avrdRAC} 
\overline{T}_{S}^{c}&=\frac{1}{nd^n}\sum \frac{n!}{n_{0}!n_{1}!\dots n_{d-1}!}\max\{n_{0},n_{1},\dots,n_{d-1}\}, 
\end{align} 
where the summation is over all $\binom{n+d-1}{d-1}$ possible solutions of \eqref{eq:nways}.

\section{Binary Random Access Code}
A binary random access code (figure \ref{fig:bRAC}) is a communication complexity problem based on the standard $d$RAC. Two parties, Alice and Bob, are given the following task: Alice receives $n$ dits $a=a_{0},\dots a_{n-1}$ same as in the standard $d$RAC. She sends a $d$-valued message to Bob. However, Bob gets two inputs $y\in \{0,1,\dots,n-1\}$ and $k\in \{0,1,\dots,d-1\}$. He needs to answer the question: is $a_{y}=k$? Bob encodes his answer in a variable $G$ which is 0 when his guess is YES and 1 for NO.
\begin{figure}[t]
\centering
\includegraphics[width=\linewidth]{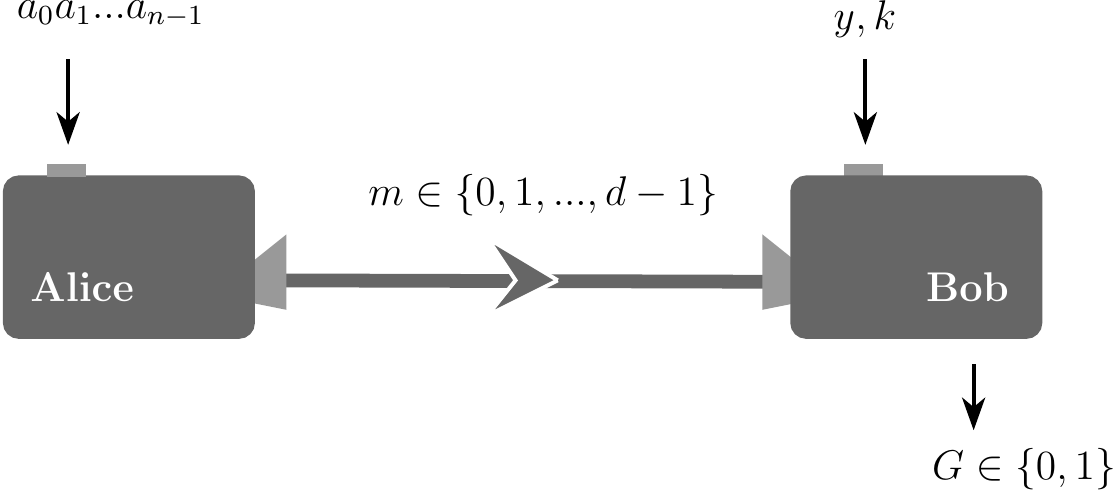}
\caption{Scheme of binary RAC (BRAC). Alice gets the input $a_{0},\dots a_{n-1}$ and sends the message $m$ to Bob. Besides the message Bob receives two inputs $y\in \{0,1,\dots,n-1\}$ and $k\in \{0,1,\dots,d-1\}$. His task is to guess whether $a_{y}=k$ or not. His answer is encoded in $G$, which is 0 when his guess is YES and 1, when it is NO.}
\label{fig:bRAC}
\end{figure}
\subsection{Defining average payoff function}
We are free to reward the parties with any number of points, specified by a payoff function $T(a,y,k,G)$. Therefore, for simplicity we assume that this function does not depend on the values of numbers $a_{i}$ in the input $a$ with indices different than $y$. Hence, we assign $T$ only two values 
\begin{align}\label{eq:assPoints}
T(a_y,k,G)&=\left\{ 
\begin{array}{ccccc}
T_{YES}&\text{when}&G=0\ \text{and}&a_{y}=k \\
1 & \text{when} &G=1\ \text{and}&a_{y}\neq k.\\
\end{array} \right.
\end{align}
We are interested in the average payoff function, which is a linear combination of payoffs for all possible uniformly distributed inputs. Without loss of generality, we can normalize average payoff such that it takes the value within $[0,1]$. Thus, for binary RAC with payoffs defined in \eqref{eq:assPoints} we have
\begin{align}\label{eq:avrBrac}
\overline{T}_{B}&=\frac{1}{nd^{n}T_{d}}\bigg[\sum\limits_{a,y,k}\Big(p(G=0|a,y,k,a_{y}=k)T_{YES}+\\
&+p(G=1|a,y,k,a_{y}\neq k)\Big)\bigg]\ ,\nonumber
\end{align}
where $T_{d}=T_{YES}+d-1$  such that $\overline{T}_B$ is normalized.

\subsection{Optimal classical strategy for Bob}

For finding the optimal classical strategy for Bob, first we split him into two parts $B_{I}$ (initial Bob) and $B_{F}$ (final Bob). $B_{I}$ gets the message $m$ from Alice, receives input $y$ and forwards $d$ long bit string $b=b_{0},\dots,b_{d-1}$ to $B_{F}$. Each of the bits in the string represents the given answer of $B_{F}$ for a different question ruled by $k$. Thus, when  $B_{F}$ gets $k$ and the bit string $b$ he returns $G=b_{k}$ (figure \ref{fig:bRACBIBF}). This splitting in no way reduces the generality of Bob's behavior since the whole information processing part is done locally by $B_{I}$. $B_{F}$ only returns one of the values from a table provided by $B_{I}$.
\begin{figure}[h]
\centering
\includegraphics[width=\linewidth]{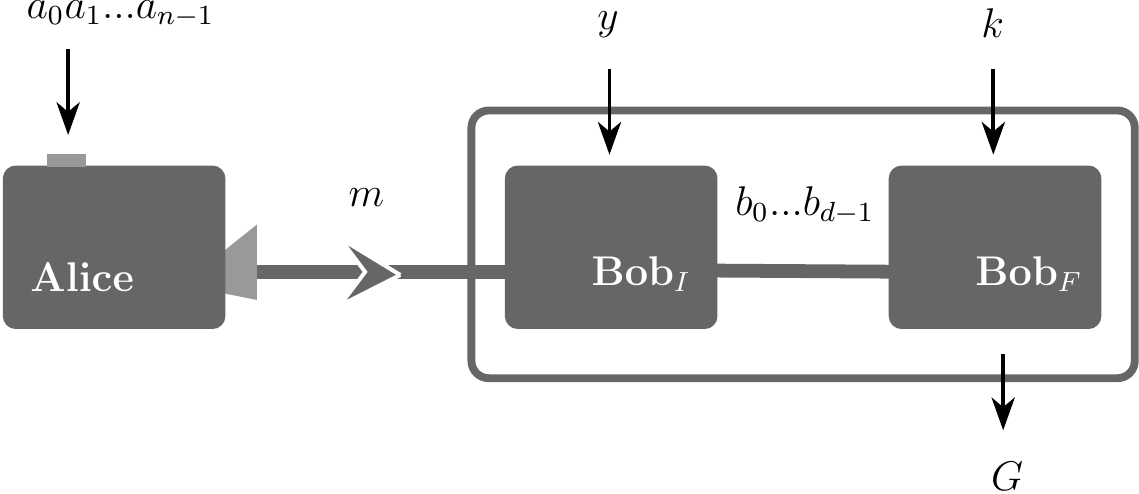}
\caption{Scheme of binary RAC. Bob is split into two parts $B_{I}$(initial Bob) and $B_{F}$(final Bob).}
\label{fig:bRACBIBF}
\end{figure}

Notice that before receiving Alice's message Bob knows nothing about the string $a$, so his entropy $H(a)=n \log d$ (we assume Alice's inputs are uniformly distributed). After receiving the message, Bob's entropy for each $a_{i}$ is reduced to $H_{i}^{m}=H(a_{i}|m)$.  These two entropies are related by information causality principle  \cite{Pawlowski2009}
\begin{align}\label{eq:ic}
 H(a)-\sum\limits_{i=0}^{n-1}H_{i}&\leq C,
\end{align}
where $H_{i}=\sum_{m=0}^{d-1} p(m)H_{i}^{m}$ is the averaged conditional Shannon entropy and $C$ is a capacity of a classical channel. Hence, from \eqref{eq:ic} one obtains the lower bound for $H_{i}$, which is determined by two established quantities: entropy $H(a)$ and the channel capacity $C$.

Besides the message $m$, $B_{I}$ receives the input $y$, which makes him interested in the particular dit $a_{y}$ from the string $a$. Let us introduce the following probability distribution $p_{j}=p(a_{y}=j|m,y)$, where $j\in\{0,\dots, d-1\}$, which represents $B_{I}$'s knowledge about dit $y$. Firstly, one sees that the entropy $H_{i=y}^{m}$ can be presented in terms of this probability distribution
\begin{align}\label{eq:entym}
H_{i=y}^{m}=-\sum\limits_{j=0}^{d-1}p_{j}\log p_{j}.
\end{align}
Secondly, one notices that depending on the payoff function, there exists a critical value of probability ($p_{crit}$) such that if $p_{j}>p_{crit}$ then sending $b_{j}=0$ leads to larger average payoff than $b_{j}=1$. We derive a formula for $p_{crit}$ in the following way. One knows that sending $b_{j}=0$ leads to the answer $G=0$ for $j=k$. This gives $T_{YES}$ points with probability $p_{j}$. For $b_{j}=1$ one gets 1 point with $1-p_{j}$. The first option is better if $T_{YES} p_{j} \geq 1-p_{j}$, so
\begin{align}\label{eq:pcrit}
p_j&\geq \frac{1}{T_{YES}+1}=p_{crit}.
\end{align}

Furthermore, let us analyze the average payoff $T=T(m,y)$ for a message set $m$, given encoding strategy $E$, the input $y$ and $T_{d}$ defined in \eqref{eq:avrBrac}
\begin{align}
T&=\frac{1}{T_{d}}\sum^{d-1}_{j=0}\bigg ( T_{YES}p(b_{j}=0|m,y)p(a_{y}=j|m,y)\\
&+p(b_{j}=1|m,y)p(a_{y}\neq j|m,y)\bigg). \nonumber 
\end{align}
We introduce a variable $x$ as the number of bits in the string $b$ for which the optimal strategy sets to $0$ for the probability distribution $p(b_{j}|m,y)$. In other words $x$ is the number of $p_{j}$s, which are greater than $p_{crit}$. Using $x$ one can rewrite the entropy $H_{i=y}^{m}$ \eqref{eq:entym} in the following way
\begin{align}\label{eq:entymbyx}
H_{i=y}^{m}=-\sum\limits_{j=0}^{x-1}p_{j}\log p_{j}-\sum\limits_{j=x}^{d-1}p_{j}\log p_{j}.
\end{align}
Additionally, without loss of generality, we may assume that $p_{j}$ are ordered in such way that $p_{j}\geq p_{j+1}$. Then the average payoff becomes
\begin{align}\label{eq:Tmypjx}
T&=\frac{1}{T_{d}}\Bigg[ \sum_{j=0}^{x-1}T_{YES}p_{j} +\sum_{j=x}^{d-1}(1-p_j) \Bigg].
\end{align}
Because the value of $T$ \eqref{eq:Tmypjx} depends only on the sums $\sum_{j=0}^{x-1}  p_{j}$ and $\sum_{j=x}^{d-1}  p_{j}$ and not on the individual elements of the sums, we can choose that all the elements in each sum are equal because this makes the entropy $H_{i=y}^{m}$ \eqref{eq:entymbyx} largest without changing $T$. In other words the probability distribution $p_{j}$ becomes a step function: the values of all $p_{j}$ for $j=\{0,\dots, x-1\}$ are uniform (denoted by $p$) and the values of the remaining $p_{j}$ for $j=\{x,\dots, d-1\}$ are uniform as well and, according to the normalization condition $\sum_{j} p_j=1$, they must be equal to $\frac{1-xp}{d-x}$. Obviously, we assume that the encoding strategy $E$ reaches $p>\frac{1}{d}$. Due to above assumptions  one can express $T$ as a function of $x$ and $p$
\begin{align}\label{eq:tyt}
T&=\frac{1}{T_{d}}\Bigg[x[T_{YES}\ p-(1-p)]+d-1 \Bigg]. 
\end{align}
The entropy \eqref{eq:entymbyx} (from now noted by $H^{x}$) can also be expressed by these parameters
\begin{align}\label{eq:hyt}
H^{x}&=-xp\log p-(1-xp)\log \frac{1-xp}{d-x}.
\end{align}
Imposing \eqref{eq:pcrit} we substitute $T_{YES}$ in \eqref{eq:tyt} and find
\begin{align}\label{eq:newprob}
p&=\frac{T+p_{crit}[d(T-1)-2T+x+1]}{x} .
\end{align}
One can further plug the above expression into \eqref{eq:hyt} to get the entropy $H^{x}$ as a function of $d,T,x$ and $p_{crit}$.
   
\subsection{Optimal \texorpdfstring{$x$}{x} for our case}
It has been shown in the section \ref{sec:OptClassStr} that the majority encoding is optimal in the standard RAC scenario, where Alice is allowed to send only one dit of information to Bob. To employ this result in the binary RAC protocol (in this case $B_{I}$ sends to $B_{F}$ a bit string $b_{0}\dots b_{d-1}$ with exactly one $0$ in the established position  and $1$s in the others) one must put the restriction that for any $T$,  probability $p$ for $x=1$ is always greater than any $p$ for  $x\neq 1$ \eqref{eq:newprob}. To make it, one must find a lower bound of  $p_{crit}$ such that the entropy $H^{x=1}$ is always greater than any entropy $H^{x\neq 1}$ for any given value of $T$ from the relevant range. Hence, in the beginning, we define a function $\Delta_{i}$ in the following way 

\begin{align}\label{eq:deltai}
\forall i\neq 1\  \Delta_{i}&=H^{x=1}-H^{x=i}\ .
\end{align}
Notice that the symmetry of the entropy $H^{x}=H^{d-x}$ for $x\in\{1,\dots,d-1\}$ causes that it is sufficient to check the condition \eqref{eq:deltai} only for $\Delta_{i}$, $i\in\{2,3,\dots,\lceil\frac{d}{2}\rceil\}$.

Let us outline the methodology of obtaining the minimum value of $p_{crit}$ for which $\Delta_i >0$. Clearly, $\Delta_i$ is a function of $d,T,p_{crit}$. We first find the range of $T$ in terms of $d$ and $p_{crit}$ within which $\Delta_i$ is well-defined. After that, we fix the value of $d$ and $p_{crit}$, and obtain the minimum value of $\Delta_i$ within the relevant range of $T$ for all $i$. If the minimum value of $\Delta_i$ is non-positive for some $i\in \{2,\dots,\lceil\frac{d}{2}\rceil\}$, we know such value of $p_{crit}$ is not suitable. We repeat the evaluation of $\Delta_i$ for another value of $p_{crit}$ increased by a small interval than before. Once we find that $\Delta_i$ is positive for all $i\in \{2,\dots,\lceil\frac{d}{2}\rceil\}$, we conclude that the taken value of $p_{crit}$ is approximately same as the desired value.

Firstly, for every $\Delta_{i}$ we must determine the range of $T$. The lower limit of the range is the value of $T$ for which $H^{x=1}$ is maximal. According to \eqref{eq:hyt} $H^{x=1}$ takes maximum  for $p=\frac{1}{d}$.  Putting it in \eqref{eq:newprob} gives an analytical expression for the lower limit
\begin{align}\label{eq:Tmax}
T_{0}&=\frac{1+(d-2)dp_{crit}}{d+(d-2)dp_{crit}}.
\end{align}
On the other hand, the upper limit of the range is the value of $T>T_{0}$ for which $H^{x}$ takes the bound. The bound is established by putting $xp=1$ in \eqref{eq:hyt}, so it strictly depends on $x$. Hence, setting  $p=\frac{1}{x}$  in \eqref{eq:newprob} gives
\begin{align}\label{eq:Tmin}
T_{1}^{x=i}&=\frac{1+p_{crit}(d-i-1)}{1+(d-2 )p_{crit}}.
\end{align}
Thus, for every $\Delta_{i}$ there is a different range $[T_{0},T_{1}^{x=i}]$.

We have found $p_{crit}$ numerically  using a method described by the following algorithm:
\begin{enumerate}
\item For a chosen dimension $d$, fix  $p_{crit}=\frac{1}{d}$ and $\epsilon_{p_{crit}}$  which is its numerical increase. 
\item Substitute $p_{crit}:=p_{crit}+\epsilon_{p_{crit}}$.\label{itm:pcritin}
\item Calculate $T_{0}$ from \eqref{eq:Tmax}.
\item Fix variable $i:=2$.
\item Calculate $T_{1}^{x=i}$ from \eqref{eq:Tmin}\label{itm:TminPoint}.
\item Calculate $\Delta_{i}$ for $T_{0}$, $T_{1}^{x=i}$ and find the minimal value of $\Delta_{i}$ in the range $[T_{0},T_{1}^{x=i}]$ (if the minimal value does not exist do not take it into account). If $\Delta_{i}\leq 0$ for at least one of these three (or two) points then go to the point \ref{itm:pcritin}. Otherwise, $i:=i+1$.\label{itm:deltapositive}
\item Check if $i \leq \lceil\frac{d}{2}\rceil$. If it is fulfilled  then go to the point \ref{itm:TminPoint}. Otherwise return $p_{crit}$. 
\end{enumerate}
Obviously, the accuracy of our method depends strictly on  $\epsilon_{p_{crit}}$.The smaller it is the more precise is the result.  Additionally, it is noteworthy that the criteria for optimal encoding is derived from $H_{i=y}^{m}$ \eqref{eq:entym} which is valid for all $y\in \{0,\dots,n-1\}$ and thus it is independent on $n$.
  
To illustrate the procedure described above we plot the dependence of $H$ on $T$ for some small $p_{crit}$ and different values of $x$ in figures \ref{fig:HTpcrit014} and \ref{fig:HTpcrit0185}. 
\begin{figure}[ht]
\centering
\includegraphics[width=\linewidth]{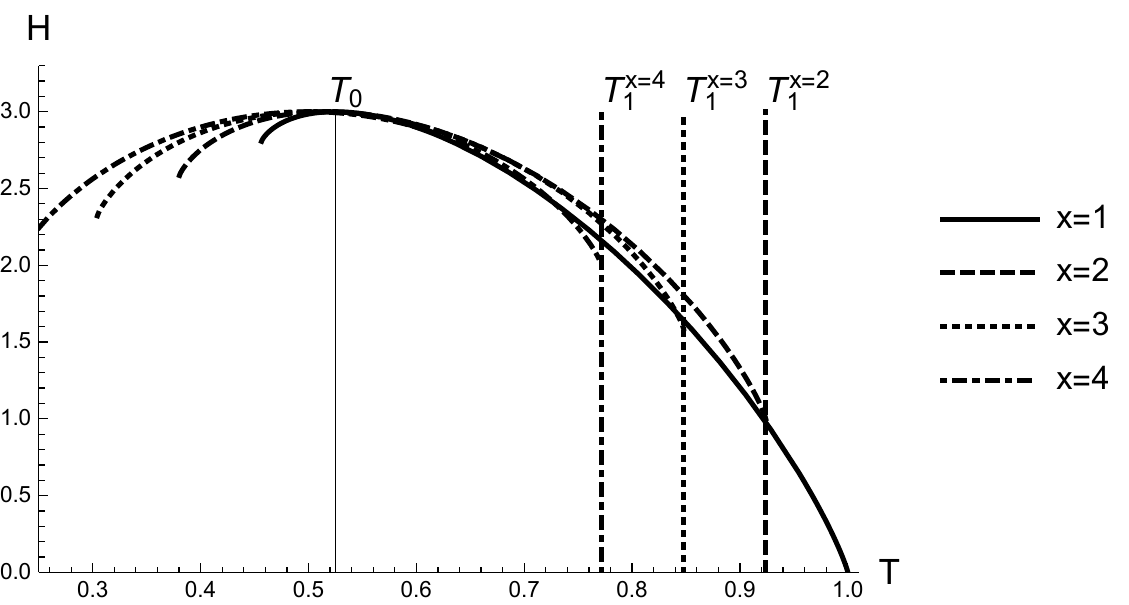}
\caption{Dependence of $H$ on $T$ for $d=8$, $x=1,2,3,4$ and $p_{crit}=0.14$. We note that the entropy for the strategy with $x=1$ is not always the largest in the established ranges of $T$. According to the numerical procedure, this is an example in which at the point \ref{itm:deltapositive} $\Delta_{i}\leq0$ and  our algorithm skips from the point \ref{itm:deltapositive} to the point \ref{itm:pcritin}. Vertical lines indicate limits of the ranges $[T_{0},T_{1}^{x=i}]$.}
\label{fig:HTpcrit014}
\end{figure}
\begin{figure}[ht]
\centering
\includegraphics[width=\linewidth]{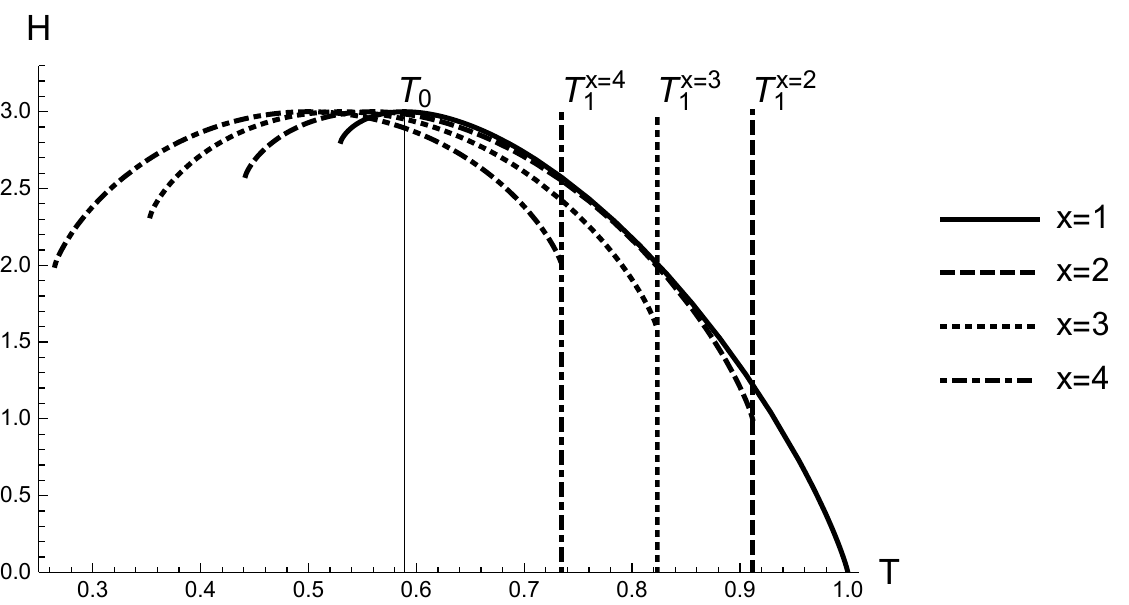}
\caption{Dependence of $H$ on $T$ for $d=8$, $x=1,2,3,4$ and $p_{crit}=0.18495$. Largest entropy is obtained with exactly one strategy for which $x=1$. According to our procedure, this is an example in which at the point \ref{itm:deltapositive} $\Delta_{i}>0$ for every $i\in\{2,3,\dots,\lceil\frac{d}{2}\rceil\}$ and our algorithm returns $p_{crit}$. Vertical lines indicate the limits of ranges $[T_{0},T_{1}^{x=i}]$.}
\label{fig:HTpcrit0185}
\end{figure}
Obtained values of $p_{crit}$ along with their corresponding $T_{YES}$ are shown in figure \ref{fig:plotdTyPcrit} and the values for some particular dimensions are mentioned in table \ref{tab:tabDimPcritTyes}. 
\begin{figure}[h]
\centering
\includegraphics[width=\linewidth]{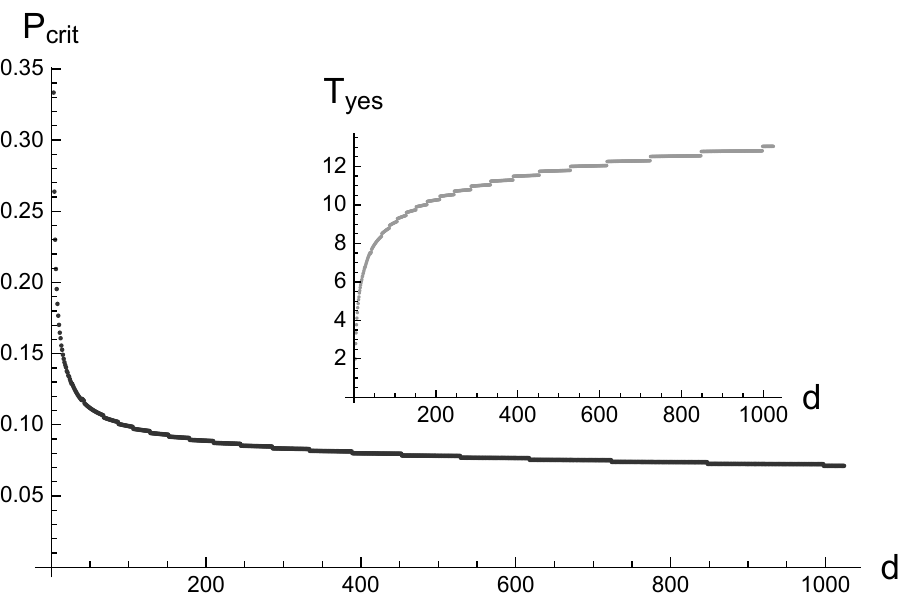}
\caption{Numerical calculation of values of minimal $p_{crit}$ and correspoding to it maximal $T_{YES}$ as a function of dimension with accuracy $\epsilon_{p_{crit}}=10^{-5}$.}\label{fig:plotdTyPcrit}
\end{figure}
\begin{table}[h!]
\begin{tabular}{|c|D{.}{.}{5}|D{.}{.}{5}|}
\hline
$d$ &\multicolumn{1}{c|}{$p_{crit}$}&\multicolumn{1}{c|}{$T_{YES}$}\\ 
\hline 
\bf{3} & 0.33340 & 1.99940\\ 
\hline 
\bf{8} & 0.18495 & 4.40687\\
\hline 
\bf{10} & 0.17021 & 4.87510\\
\hline 
\bf{50} & 0.11180 & 7.94454\\
\hline 
\bf{200} & 0.08885 & 10.25490\\
\hline 
\bf{700} & 0.07524 & 12.29080\\ 
\hline 
\bf{1000} & 0.07121 & 13.04300\\
\hline 
\end{tabular}
\caption{Values of minimal $p_{crit}$ and corresponding to it maximal $T_{YES}$ for chosen dimensions $d$.}
\label{tab:tabDimPcritTyes}
\end{table}
\subsection{The average classical and quantum payoff function for \texorpdfstring{$n=2$}{n=2} and arbitrary dimension}
Now we calculate the average classical and quantum payoff \eqref{eq:avrBrac} for binary RAC. Firstly, for a given dimension $d$, we must determine the value of $T_{YES}$ corresponding to $x=1$ as it was presented in the previous section. It follows that the optimal encoding strategy is sending the majority dit same as for the standard $d$RAC \eqref{eq:optde}. Further, it can be readily seen that given an encoding $E$ the optimal decoding is 
\begin{align}
G&=
\begin{cases}
    0 \ \text{if } \sum\limits_{a|a_y=k} \delta_{m,E(a)} \geq  \sum\limits_{a|a_y\neq k} \delta_{m,E(a)}\\
    1 \ \text{otherwise.} 
  \end{cases}
\end{align}
Therefore, in the case of majority encoding, Bob returns $G=0$ if the received message $m=k$, otherwise $1$. Given an input $a$ the total payoff over all possible $y,k$ is 
\begin{align}
T_{d }\tilde{n}&+(d-2)(n-\tilde{n}),
\end{align}
where we denote $\tilde{n}=\max\{n_{0},n_{1},\dots,n_{d-1}\}$. This is due to the fact that if $y$ is such that $n_{y}$ is the maximum, i.e.\ $a_{y}$ is the majority dit, then Bob gives the correct answer for all $k$, obtaining the maximum payoff $T_d$. Such event occurs $\tilde{n}$ times. In the other $(n-\tilde{n})$ cases Bob returns the correct answer only if $k\neq a_{y}$ and $k\neq E(a)$, obtaining $(d-2)$ payoff. Subsequently, the average payoff is given by
\begin{align}
\label{eq:avrPayoffGen}
\overline{T}_{B}^{c}&=\frac{1}{nd^{n}T_d}\sum \frac{n!}{n_0!n_1!\dots n_{d-1}!}\times\\
&\times[\tilde{n}(T_{YES}+1)+n(d-2)],\nonumber
\end{align}
where the summation is over all $\binom{n+d-1}{d-1}$ possible solutions of \eqref{eq:nways}. Imposing the expression of the average payoff of $d$RAC \eqref{eq:avrdRAC}, $\overline{T}_{B}^{c}$ simplifies to
\begin{align} 
\label{TcB}
\overline{T}_{B}^{c} &= \frac{(T_{YES}+1)\overline{T}_{S}^{c} + d-2}{T_{YES}+d-1}.
\end{align}
For $n=2$, one can find $\overline{T}_{S}^{c} = \frac{1}{2} + \frac{1}{2d}$, and substituting this in \eqref{TcB} leads to
\begin{align}
\label{payoffBRACn2Class}
\overline{T}_{B}^{c}&=\frac{1}{T_{d}}\bigg[\frac{T_{YES}+1+d(2d+T_{YES}-3)}{2d}\bigg]. 
\end{align}
Let us consider a quantum strategy based on the quantum $d$RAC presented in \cite{Tavakoli2015}. Alice codes her input $a_0a_1$ in $d$-dimensional quantum state as follows
\begin{align}\label{eq:codingOfAliceQuant}
\ket{\psi_{a_{0}a_{1}}}&=\frac{1}{N_{2,d}}\Bigg(\ket{a_{0}}+\frac{1}{\sqrt{d}}\sum_{j=0}^{d-1}\omega^{ja_{1}}\ket{a_{1}+j}\Bigg),
\end{align}
where $N_{2,d}=\sqrt{2+\frac{2}{\sqrt{d}}}$ is the normalization factor and $\omega=e^{2\pi i}$ is quantum Fourier transform factor. For the decoding Bob uses the following projective measurements $M_k^{y}$, depending on input $y,k$,
\begin{align}\label{eq:decProjSet}
M_k^{0}=\{P_{k}^{0},\mathbb{I}-P_{k}^{0}\},\ M_k^{1}=\{P_{k}^{1},\mathbb{I}-P_{k}^{1}\}.
\end{align}
Here, $P_{k}^{0}=\ket{k}\bra{k}$ 
and $P_{k}^{1}=\ket{\bar{k}}\bra{\bar{k}}$
taking $\ket{\bar{k}}=\frac{1}{\sqrt{d}}\sum_{k=0}^{d-1}\omega^{k\bar{k}}\ket{k}$ correspond to the outcome $G=0$. Simple calculations lead to the quantum average payoff
\begin{align}
\overline{T}_{B}^{q}&=\frac{1}{T_{d}}\bigg[\frac{T_{YES}+1+\sqrt{d}(2d+T_{YES}-3)}{2\sqrt{d}}\bigg].\label{eq:payoffBRACn2Quant}
\end{align}
The difference between \eqref{eq:payoffBRACn2Quant} and \eqref{payoffBRACn2Class} is given by 
\begin{align}\label{diffPayoffBRACn2}
\overline{T}_{B}^{q}-\overline{T}_{B}^{c}&=\frac{1}{T_{d}}\bigg[\frac{(T_{YES}+1)(\sqrt{d}-1)}{2d}\bigg],
\end{align}
which is always greater than zero for $d\geq2$. Thus, the binary version of RAC provides a device independent way to test arbitrary dimensional quantum system employing only binary outcome measurements.
\section{Summary}
The primary feature of this article is to present a DW applicable to test arbitrarily large quantum systems implementing only binary outcome measurements. We propose a new variant of RAC and take the average payoff of this communication task as the indicator of the dimension. We have provided the optimal classical bound for the  binary version of the generalized RAC. In contrast to the other quantum communication complexity problems in which the number of prepared states grows exponentially with dimension, the proposed DW requires $d^2$ different preparations and $2d$ measurements. 
In the future, it would be interesting to prove the optimality of the quantum strategy for binary RAC and look for more robust DWs retaining the aforementioned significant features. 
\begin{acknowledgments} 
We thank M{\'a}t{\'e} Farkas and Edgar A. Aguilar for helpful discussions and comments. We are also grateful to  
Edgar A. Aguilar for the revision of the manuscript.

This work was supported by FNP programme First TEAM (Grant No. First TEAM/2016-1/5, First TEAM/2017-4/31), NCN grants 2014/14/E/ST2/00020 and 2016/23/N/ST2/02817.
\end{acknowledgments}
\bibliography{bibliography}
\end{document}